
\documentclass[10pt,leqno]{amsart}
\usepackage{graphicx}
\baselineskip=16pt

\usepackage{indentfirst,csquotes}

\topmargin= .5cm
\textheight= 20cm
\textwidth= 32cc
\baselineskip=16pt

\evensidemargin= .9cm
\oddsidemargin= .9cm

\usepackage{graphicx}
\usepackage{amsmath,amsthm,amssymb}
\usepackage{graphicx}
\usepackage{tabularx}
\usepackage[colorlinks=true, allcolors=blue]{hyperref}
\usepackage{tcolorbox}
\usepackage[T1]{fontenc}
\usepackage{listings}
\usepackage{xcolor}
\usepackage{physics}
\usepackage{subcaption}   
\usepackage{algorithm}
\usepackage[noend]{algpseudocode}
\usepackage[noend]{algpseudocode}
\usepackage{comment}


\newtheorem{example}{Example}[]
\newtheorem{theorem}{Theorem}[]
\newtheorem{proposition}{Proposition}[]

\begin{document}

\title{From Physical to Logical: Graph-State-Based Connectivity in Quantum Networks}

\author{Mateo M. Blanco,  Manuel Fernández-Veiga, Ana Fernández-Vilas, Rebeca P. Díaz-Redondo}
\email{mveiga@det.uvigo.es avilas@uvigo.es rebeca@det.uvigo.es}

\let\thefootnote\relax
\footnotetext{This work was supported under the grants TED–2021–130369B–C31 funded by MICIU/AEI/10.13039/501100011033 by the “European Union NextGenerationEU/PRTR”;  PID2023-148716OB-C31 funded by MICIU/AEI/10.13039/501100011033;  grant ED431B 2024/41 (GPC) by the Galician Regional Government; and the regional agreement for laboratories and demonstration centers in cybersecurity within the RETECH program.}

\address{atlanTTic, Universidade de Vigo, Spain}

\maketitle

\begin{abstract}
Entanglement is a key resource in quantum communication, but bipartite schemes are often insufficient for advanced protocols like quantum secret sharing or distributed computing. Graph states offer a flexible way to represent and manage multipartite entanglement in quantum networks, enabling logical connectivity through local operations and classical communication (LOCC). In this work, we extend existing approaches based on bi-star configurations to more complex multi-star topologies. We analyze the maximum connectivity that can be achieved in networks of $m$ switches, each connected to $n$ clients, including asymmetric cases where the number of clients varies per switch. We also propose methods to enable logical communication between distant nodes. Our results support the development of scalable quantum networks with rich connectivity beyond traditional bipartite structures. \\
KEYWORDS:   Graph states, entanglement, virtual graphs, quantum networks.

\end{abstract}

\section{Introduction}
\label{sec:intro}

Entanglement and, notably, the distribution of entangled pairs along with 
entanglement swapping and purification forms the foundations 
in the development of the quantum internet and quantum communications, 
with its unique properties enabling secure and efficient information 
transfer across distant locations~\cite{Cacciapuoti2020, Shi2024, 
Valivanti2020, Wallnofer2019}.  However, bipartite entanglement becomes 
insufficient when exploring complex quantum  protocols beyond simple QKD, 
including quantum secret sharing, quantum error correction, and  multipartite QKD.
More complex entangled states, such as GHZ states,  are challenging to
generate using basic entanglement processes such as entanglement swapping.  
To address these shortcomings, Graph States~\cite{Hein2006} offer a structured 
framework for generating multipartite entanglement in quantum networks by 
associating entanglement with graph edges and enabling transformations via 
LOCC operations~\cite{Vandennest2004}. 

This approach allows for flexible, scalable entanglement distribution while 
preserving the network’s physical structure. Theoretical~\cite{Guhne2005, 
Markham2008, Vandre2024} and experimental~\cite{Bell2014, Huang2023} 
studies confirm their benefits in optimizing network resources. From the quantum 
networking perspective, graph states have been proposed as the core architecture 
for all-photonic quantum repeaters within a network~\cite{Benchasattabuse2024}. 
Applying LOCC operations and measurements to graph states alters the topology of 
the logical network and allows one to connect initially disconnected nodes. As 
classical network reachability supports the ability of a node to successfully
communicate with another node in the network, quantum reachability should establish 
and maintain entanglement between two nodes anywhere in the network, enabling
communication and, as an extension, distributed quantum protocols.

In this paper, we follow the approach in~\cite{Chen2024} to obtain artificial 
topologies from the physical ones by manipulating bi-colorable graph states 
through local operations. Reference \cite{Chen2024} provides protocols to achieve various
connectivity schemes  from a bi-star configuration: peer-to-peer---all leaves of 
one star graph are connected to a leaf in the other; role delegation---one of the 
leaves acts as the center of the star topology; and extranet---all leaves are
interconnected, forming a $\ket{K_{n_1,n_2}}$ graph state. In parallel, Quantum
Local Area Networks (QLANs) are introduced in~\cite{Mazza2024}, as a star graph 
with a central switch and multiple leaf nodes representing clients. When central
nodes in the stars are connected by quantum channels, a bi-star configuration 
emerges, i.e., two star graphs joined through entanglement of their respective 
centers. Finally,~\cite{deJong2024} establishes a strong upper bound on the maximum
neighbor connectivity in a linear cluster state, $\alpha \leq \lfloor 
\frac{n+3}{2} \rfloor$, where $n$ is the number of elements in the cluster.

We seek the generalization of the approaches above to extend their applicability 
to more complex network architectures. Specifically, for a  linear graph consisting 
of $m$ stars, each with $n$ nodes (leaves), the maximum entangled states that can 
be generated have not been fully characterized. In this work, we partially 
characterize these maximal entangled states, leveraging multipartite entanglement 
and its implementation in graph states to overcome the limitations of traditional 
bipartite entanglement-based communication. Precisely, our contributions are as follows:  (1) characterizing the maximum achievable connectivity for a general multi-star network with $m$ switches, each connected to $n$ nodes (leaves);   (2) analyzing scenarios where each switch has a different number of nodes; and   (3) developing a method to enable communication between two or more nodes in the network.

\section{Theoretical preliminaries}
\label{sec:background}

Since the formalism of graph states builds upon the basic concepts of graph theory,
we briefly review in this Section the necessary definitions and
properties.

A graph $G = (V, E)$ consists of a set of vertices $V$ connected pairwise by edges $E$. 
Its extension to graph states in quantum mechanics is both straightforward and profound: 
the vertices of the graph correspond to qubits, each initialized in the state 
$\ket{+} = \frac{1}{\sqrt{2}}(\ket{0}+\ket{1})$, while the edges represent 
controlled-$Z$ (CZ) gates that entangle the connected qubits. Under these assumptions, 
a graph state is formally defined as  
\begin{equation*}
    \ket{G}=  \prod_{e \in E} CZ \ket{+}^{\otimes \abs{V}}.
\end{equation*}
where CZ, in the basis $\{ \ket{00}, \ket{01}, \ket{10}, \ket{11}\}$, applies a Pauli-$Z$ 
to the second input qubit  only when the first qubit is $\ket{1}$. A simple example
illustrates the entanglement  structure: a graph consisting only of two connected 
vertices forms a Bell state, i.e.,
$\ket{G} = CZ \left(\ket{+}, \ket{+} \right) = \frac{1}{\sqrt{2}} (\ket{\phi^-} + \ket{\psi^+})$, where the usual Bell states  have been denoted as
$\ket{\phi^{\pm}}= \frac{1}{\sqrt{2}}(\ket{00}\pm \ket{11})$, $\ket{\psi^{\pm}} = \frac{1}{\sqrt{2}} (\ket{01} \pm \ket{10})$. Since graph states are built from their 
classical graph counterparts, their fundamental properties and basic definitions 
remain similar. First, the neighborhood of a vertex $i \in V$ in the graph $G=(V,E)$ 
is the set of nodes directly connected to $i$, $N_i = \{j \in V : (i,j) \in E \}$. 
A path is a sequence of edges that connects a sequence of distinct vertices. 
If the initial and final vertices coincide, the path 
forms a cycle. A tree is a connected acyclic graph ---a unique path 
exists between any pair of nodes---, and the nodes of degree $1$ in a tree (nodes with 
only one neighbor) are called the leaves of the tree. A caterpillar is 
a tree such that the removal of leaves results in a path. A star is a tree where
all the nodes but one are leaves, equivalently, if deletion of all the leaves results
in a single node without edges.

The logical structure of a graph state may be modified by means of the application of 
Pauli measurements. Each of these will induce local changes in the measured qubit and 
its vicinity, altering the entanglement relationships among the existing qubits and 
thus changing the structure of the graph state and correspondingly the virtual topology 
of the network of states. More specifically, these effects will be a combination of 
the following elementary transformations on a graph.  Given a subset of nodes 
$A \subset V$, the induced subgraph of $A$ will be  denoted as $G[A] = (A, E_A)$, and 
is the subgraph with edges $E_A = \{(i,j) \in E: i \in A \wedge j \in A\}$ obtained 
after deleting from  $G$ all the vertices not in $A$ and their edges.

\begin{enumerate}
\item[(i)]  \textbf{Graph complementation}. Let $G=(V,E)$ be a graph. 
The complement of $G$, denoted as $\tau(G)$ is the graph $(V, \Bar{E})$, where  
\begin{equation*}    
    \Bar{E} = V^2 \setminus E = \{(i,j) \in V ^2 : (i,j) \notin E\}.
\end{equation*}
Clearly, $G \cup \tau(G)$ is the \emph{complete graph} on $|V|$ vertices, i.e., the graph
wherein any pair of vertices is connected by an edge. The complete graph with $n$ 
nodes will be denoted hereafter as $K_n$.

       \item[(ii)] \textbf{Local complementation}. Of particular interest for graph states is the 
local complementation at a vertex $i$.  This operation consists of taking the complement 
of the neighborhood of $i$. Accordingly, the local complementation at $i$ is
\begin{equation*}
    \tau_i(G) = \bigl( V, (E \cup N_i^2) \setminus E_{N_i} \bigr),
\end{equation*}
where $E_{N_i}$ is the set of edges in the neighborhood $N_i$ of $i$. The resulting graph
$\tau_i(G)$ has the same vertex set $V$ as $G$, but the edges within $N_i$  are deleted, 
and all the edges missing between pairs of nodes in $N_i$
are appended. This local operation corresponds to a sequence of local Clifford operations, 
as described in~\cite{Vandennest2004}.

       \item[(iii)]  \textbf{Vertex deletion}, where a graph $G$ is transformed into $G_{-i}$ by 
removing a vertex $i$ and all its associated edges:
\begin{equation*}
    G_{-i} = \bigl( V\setminus \{ i \}, E\setminus (\{i\} \times N_i) \bigr).
\end{equation*}
\end{enumerate}

Though the operation of local complementation may initially appear abstract, it actually 
aligns with a fundamental concept in quantum mechanics: \textbf{Local Unitary equivalence (LU)}. 
Two $n$-qubit graph states, $\ket{G}$ and $\ket{G'}$, are said to be LU equivalent 
if there exists a set of local unitaries $\{ U_i \}_{i\in I}$ such that  
\begin{equation*}
    \ket{G'} = \bigotimes_{i \in I} U_i \ket{G},
\end{equation*}
so that it is possible to transform $\ket{G}$ into $\ket{G^\prime}$ by a sequence of local
quantum operations. For the special case of local complementation of a vertex $a$ and its
neighborhood $N_a$, the corresponding unitary transformation is given by  
\begin{equation}
\label{unitary} 
    U_a^T = e^{-\imath \frac{\pi}{4} \sigma_{X_a}} \bigotimes_{b \in N_a} e^{\imath 
    \frac{\pi}{4}  \sigma_{Z_b}}, 
\end{equation}
where $\sigma_{X_a}$ is the application of a Pauli $X$ operator to vertex $a$.  Indeed, 
local complementation is an LU equivalence, with the corresponding unitary transformation 
given by~\autoref{unitary}.

\begin{figure}[t]
    \centering
    \includegraphics[width=\linewidth]{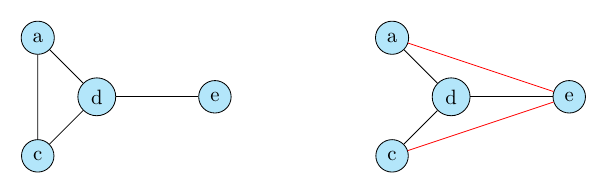}
    \caption{Example of local complementation at vertex $d$. Added edges are colored in red.}
    \label{fig:LC}
\end{figure}

\begin{example}[Local complementation]
\label{example:startoKn}  As shown in the ~\autoref{fig:LC}, when vertex $d$ is selected, the edges connecting its neighbors that were previously present are removed, while new edges are formed between those neighbors that were not previously connected. 

On the other hand, an useful example of local complementation and LU equivalence is the 
    transformation between a complete graph $K_n$ of $n$ vertices $\ket{K_n}$ and a
    star graph with $n - 1$ leaf nodes and a central node, $\ket{S_{n-1}}$. This is
    achieved by performing local complementation of $K_n$ at any node, so that this node becomes the center of $S_{n-1}$. 
\end{example}

The effect of Pauli measurement on a graph state can be summarized in the following
key result~\cite{Hein2006}
\begin{theorem}[Pauli Measurements on Vertex \( i \)]
\label{thm:Pauli-graph-measurements}
    Let $G = (V,E)$ be a graph with $\abs{V} = n$ nodes. The graph $\Tilde{G} = (\Tilde{V}, \Tilde{E})$, with $|\tilde{V}|=  n- 1$, obtained after a Pauli measurement on
    qubit $i$ is as follows.
    \begin{enumerate}
        \item[(i)] Pauli $Z$. $\sigma_{Z_i}(G) := \Tilde{G} = G_{-i}$  the deletion of 
        vertex $i$.
        \item[(ii)] Pauli $Y$. $\sigma_{Y_i}(G) := \tilde{G} = (\tau_i(G))_{-i}$ 
        corresponds to a local complementation at $i$, followed by the deletion of $i$.
        \item[(iii)] Pauli $X$. $\sigma_{X_i}(G) := \tilde{G} = \tau_{k_0}(\tau_i(\tau_{k_0}(G))_{-i})$ for an arbitrary node $k_0 \in N_i$. The new 
        graph is the local  complementation of $G$ at the neighbor  $k_0$, followed by a 
        local complementation at $i$, the deletion of $i$, and a final local complementation 
        at $k_0$.
    \end{enumerate}
\end{theorem}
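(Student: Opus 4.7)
The plan is to invoke the stabilizer formalism. Recall that $\ket{G}$ is the unique joint $+1$-eigenstate of the generators $K_i = \sigma_{X_i}\prod_{j\in N_i}\sigma_{Z_j}$, one per vertex, and that a Pauli measurement on a stabilizer state is handled by Gottesman's update rule: if the measured Pauli $P$ anticommutes with some generator, pick one such generator $K^{*}$, multiply every other anticommuting generator by $K^{*}$ to restore commutativity, remove $K^{*}$, and adjoin $\pm P$ with the sign fixed by the random outcome. The task in each case is then to identify a natural set of post-measurement generators and recognize it as the stabilizer of the claimed graph state, up to local Clifford corrections.

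For case (i), $P=\sigma_{Z_i}$ anticommutes only with $K_i$: any $K_j$ with $j\notin N_i\cup\{i\}$ is trivial on qubit $i$, while any $K_j$ with $j\in N_i$ already carries $\sigma_{Z_i}$. After the update, $K_i$ is removed and $\pm\sigma_{Z_i}$ is adjoined; substituting $\sigma_{Z_i}=\pm 1$ simply deletes the $\sigma_{Z_i}$ factor from each $K_j$ with $j\in N_i$, producing exactly the stabilizer generators of $\ket{G_{-i}}$ on the surviving qubits. A $-1$ outcome is compensated by a local $\sigma_Z$ on every neighbor of $i$, which is why the identification holds up to local unitaries.

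Cases (ii) and (iii) I would reduce to case (i) by exploiting the LU equivalence of local complementation supplied by \autoref{unitary}. For (ii), conjugating $\sigma_{Y_i}$ by $U_i^{T}$ only uses its single-qubit factor $e^{-\imath\pi/4\,\sigma_{X_i}}$, since the $\sigma_{Z_b}$ factors act on other qubits and commute with $\sigma_{Y_i}$; a direct computation gives $e^{-\imath\pi/4\,\sigma_{X_i}}\sigma_{Y_i}e^{\imath\pi/4\,\sigma_{X_i}} = \pm\sigma_{Z_i}$. Hence measuring $\sigma_{Y_i}$ on $\ket{G}$ is equivalent to measuring $\sigma_{Z_i}$ on $\ket{\tau_i(G)}$, which by case (i) yields $(\tau_i(G))_{-i}$. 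For (iii), choose any $k_0\in N_i$ and conjugate by $U_{k_0}^{T}$; because $i\in N_{k_0}$, the relevant factor is $e^{\imath\pi/4\,\sigma_{Z_i}}$, which maps $\sigma_{X_i}\mapsto\pm\sigma_{Y_i}$. Measuring $\sigma_{X_i}$ on $\ket{G}$ is therefore equivalent to measuring $\sigma_{Y_i}$ on $\ket{\tau_{k_0}(G)}$; applying case (ii) delivers $(\tau_i(\tau_{k_0}(G)))_{-i}$, and the outer $\tau_{k_0}$ in the statement arises from the Clifford correction needed to undo the original conjugation by $U_{k_0}^{T}$ on the $|V|-1$ surviving qubits.

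The main technical obstacle is the careful bookkeeping of signs and byproduct operators, and in particular the verification that the outermost $\tau_{k_0}$ in case (iii) is genuinely the correct correction rather than some more elaborate Clifford. The subtlety is that the restricted unitary $e^{\imath\pi/4\,\sigma_{X_{k_0}}}\bigotimes_{b\in N_{k_0}\setminus\{i\}}e^{-\imath\pi/4\,\sigma_{Z_b}}$ acting on $V\setminus\{i\}$ must be shown to coincide with the local-complementation unitary at $k_0$ for the graph $(\tau_i(\tau_{k_0}(G)))_{-i}$; this requires a short combinatorial check that the neighborhood of $k_0$ transforms consistently under the successive operations $\tau_{k_0}$, $\tau_i$, deletion of $i$, and $\tau_{k_0}$ again, so that the residual local unitary is precisely $U_{k_0}^{T}$ applied to the resulting $(|V|-1)$-qubit graph state.
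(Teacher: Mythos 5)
The paper does not actually prove \autoref{thm:Pauli-graph-measurements}; it imports it verbatim from the graph-state literature (Hein et al.), so there is no in-paper argument to compare against. Your strategy --- prove the $Z$ rule directly in the stabilizer formalism, then reduce $Y$ to $Z$ and $X$ to $Y$ by conjugating with the local-complementation unitaries of \autoref{unitary} --- is precisely the standard derivation from that reference, and your cases (i) and (ii) are essentially complete and correct.

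Case (iii), however, contains a genuine gap in the step you yourself flag, and the ``short combinatorial check'' you propose to close it would fail. You need the residual correction $(U_{k_0}^{T})^{\dagger}$ restricted to $V\setminus\{i\}$, whose $\sigma_Z$-rotations act on $N_{k_0}(G)\setminus\{i\}$, to coincide with the local-complementation unitary at $k_0$ for the intermediate graph $H=(\tau_i(\tau_{k_0}(G)))_{-i}$, whose $\sigma_Z$-rotations must act on $N_{k_0}(H)$. But these neighborhoods do not agree in general: since $k_0\in N_i$, the local complementation at $i$ toggles edges between $k_0$ and the other members of $N_i(\tau_{k_0}(G))$, so $N_{k_0}(H)\neq N_{k_0}(G)\setminus\{i\}$. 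Already for the path $1$--$2$--$3$ with $i=2$, $k_0=1$ one has $N_{k_0}(G)\setminus\{i\}=\emptyset$ while $N_{k_0}(H)=\{3\}$. The theorem is nonetheless true: the mismatch is absorbed because the $Y$-measurement step in your reduction does not return the bare graph state $\ket{H}$ but $\ket{H}$ dressed with $e^{\pm\imath\frac{\pi}{4}\sigma_Z}$-type byproducts on the former neighborhood of $i$, and it is the \emph{combination} of those byproducts with the residual $(U_{k_0}^{T})^{\dagger}$ that equals the local-complementation unitary of $H$ at $k_0$ up to local Pauli and $\sqrt{\pm\imath\sigma_Y}$ corrections. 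So to finish you must carry the explicit byproduct operators from case (ii) into case (iii) and track them through the deletion of $i$, rather than arguing about $N_{k_0}$ alone.
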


\section{Problem Statement}
\label{sec:problem-statement}

By using ~\autoref{thm:Pauli-graph-measurements}, it should be possible to transform 
an initial graph state $\ket{G}$ into another one $\ket{G^\prime}$ particularly suited 
to some specific quantum protocol, or to manipulate an initial multipartite quantum 
state for some communication task (quantum or entanglement-assisted but classical).
The sequence of measurements that lead to such a desired target state will be termed 
a \emph{graph protocol}. However, since $\operatorname{CZ}^2 = I$, we need that 
the multipartite states that our graph protocol manipulates must be bi-colorable, 
where a graph is said to be bi-colorable  when  its vertex set $V$ can 
be partitioned into two subsets $\{P_1, P_2\}$ such that no two vertices within the 
same subset are adjacent. Many interesting quantum network models naturally emerge 
as bi-colored graphs, allowing for the  derivation of  practical results. 
Under this assumption, suppose $\ket{G}$ is
a graph state whose underlying graph $G$ is bi-colorable. We seek to characterize
what bi-colorable graph states $\ket{G^\prime}$ are achievable through a graph 
protocol $\mathcal{T}$ composed of a sequence of Pauli measurements, $\ket{G} \overset{\mathcal{T}}{\longrightarrow} \ket{G^\prime}$. The 
graph protocol  must avoid those sequences that may disrupt bi-colorability. 

\begin{figure*}[th!]
    \centering
    \includegraphics[width=\textwidth]{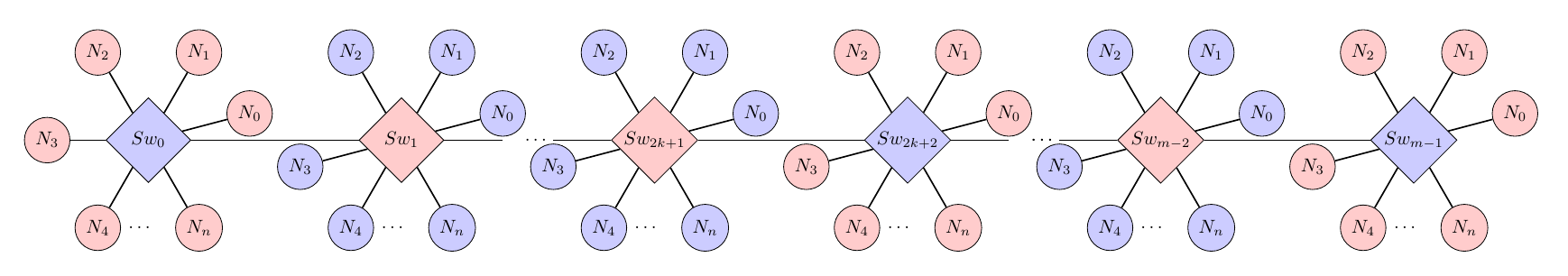}
    \caption{Multi-star configuration as an example of a linear network.}
    \label{fig:multistar}
\end{figure*}

Bi-colorable graphs are of significant importance in quantum graph states theory 
because they enable the implementation of error correction and purification
protocols~\cite{Aschauer2005, Hein2004}. This idea has been further explored in 
the context of so-called quantum local area network (QLAN) 
architectures~\cite{Mazza2024b}. Despite the constraints imposed by the
bi-colorable condition, the advantages provided by this class of graphs outweigh these 
restrictions. One such canonical graph is the binary star graph:
two star graphs joined at their central nodes. It will be denoted as
$S_{n_1,n_2}$, where $Sw_{1}$ represents the center of the first star and $Sw_{2}$ 
is the center of the second star. The remaining vertices are partitioned into 
sets $N_i^1$ and $N_j^2$, where $i \in \{1, \dots, n_1\}$ and $j \in \{1, \dots, n_2\}$. 
This binary star graph can be constructed from two independent star graphs via a 
remote CZ operation applied to their central nodes. The extension of such 
entanglement generation between their centers converts an arbitrary number of 
isolated stars connected into a connected multi-star configuration, 
see~\autoref{fig:multistar}. 

\section{Achievable Multiparty Connectivity Through Graph-state Manipulations}
\label{sec:main-results}

Consider a network in the form of a multi-star topology, as shown 
in~\autoref{fig:multistar}. The multi-star is a sequence of star subgraphs where 
the centers of the stars are connected in a line graph. We will refer to the 
centers $Sw_i$ node in a star $i$ as a \emph{switch} and to its peripheral leaf 
nodes $N_i$ as its \emph{clients}, and assume that the  stars have a variable 
number of leaves. Our goal is to characterize the achievable network connectivity 
on these nodes, where connectivity means the establishment of multipartite 
entanglement among an arbitrary subset $S \subseteq \cup_{i \in I} N_i$ 
of clients, if this is feasible. Specifically, we assume that each star 
subgraph represents  the existence of a shared GHZ state
$\ket{\text{GHZ}_{n + 1}} := \frac{1}{\sqrt{2}} \bigl( \ket{0}^{\otimes (n + 1)} + \ket{1}^{\otimes (n + 1)} \bigr)$
among the switch and its $n$ leaves, and we seek to determine the largest GHZ state 
that can be realized in a general network consisting of $m$ of these switches using 
the minimum number of measurements on the original graph states. In summary, 
the multi-star representation is particularly useful because it aligns naturally  
with a physical quantum switch configuration with the end-nodes as the clients; 
and the extended (largest) entangled state can then be used for implementing a multi-
partite quantum protocol (see \autoref{aplications:protocols})

As a first step, we first establish a method to achieve the maximum neighbor  
configuration starting from a multi-star topology.

\subsection{Maximum neighbor configuration} 
\label{subsec:maxneigh}

Recent works~\cite{deJong2024} have explored \emph{linear} graph states (or cluster states)
and their transformation into maximal $\ket{\text{GHZ}}$ configurations. In fact, 
\cite{deJong2024} establishes a strict upper bound for an arbitrarily long linear 
cluster state, given by $\alpha_0 \leq \left\lfloor \frac{m+3}{2} \right\rfloor$,
where $m$ represents the number of linear nodes in the cluster, and the equality holds for 
an odd number of switches. While the proofs of these theorems are compelling and neat, the 
addition of leaf nodes to each switch, as in our case, further complicates the protocol 
and requires a refinement in the arguments. An important remark is that, in both cases, 
the post-measurement state is locally Clifford equivalent to the GHZ state via local
complementation. We present the modified 
measurement protocol as~\autoref{algomaxconnected}. For our problem, we have the following simple result.
\begin{proposition}
\label{alpha}
    Given a multi-star graph $G$ of $m$ switches and $n$ nodes each, $m$ is an odd number 
    of and $n\in \mathbb{N}$, the complete graph $K_{\alpha}$ is achievable with \autoref{algomaxconnected} for $\alpha = (n+1) \frac{m+1}{2}$.
\end{proposition}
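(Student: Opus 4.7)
The plan is to show that $K_\alpha$ arises after a scheduled sequence of Pauli measurements that discards exactly the vertices in every other switch. The vertex count dictates the design: the original caterpillar has $m(n+1)$ vertices, and $\alpha = (n+1)(m+1)/2 = m(n+1) - (n+1)(m-1)/2$, so one must eliminate precisely the $(m-1)/2$ even-indexed switches together with their $n$ leaves each. I would split the argument into two phases: first, apply a $Z$ measurement to every leaf attached to an even-indexed switch, which by \autoref{thm:Pauli-graph-measurements}(i) simply deletes those vertices and leaves each even-indexed switch with degree two; second, eliminate the even-indexed switches themselves by $X$ measurements applied in a carefully chosen order.

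The core technical step is an absorption lemma capturing what each such $X$ measurement does in this setting. I would state: if $b$ is a degree-two vertex with neighbors $a$ and $c$, and the remaining neighbors $L_a$ of $a$ are mutually non-adjacent and adjacent only to $a$, then the $X$ measurement on $b$ with $k_0 = a$ from \autoref{thm:Pauli-graph-measurements}(iii) removes $b$, re-attaches $a$ and every element of $L_a$ as new leaves of $c$, and leaves the rest of the graph unchanged. The proof is a direct four-step computation: $\tau_a$ turns $\{b\} \cup L_a$ into a clique; $\tau_b$ then flips the edges inside $N_b = \{a, c\} \cup L_a$, dissolving the clique among $L_a$, severing $a$ from $L_a$, and creating the edges $a\text{-}c$ and $c\text{-}l$ for every $l \in L_a$; deletion of $b$ cleans up the edges touching $b$; and the final $\tau_a$ acts trivially because $a$'s sole remaining neighbor is $c$.

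I would then run the lemma inductively by sweeping from the two outer ends of the switch chain inward toward a chosen odd-indexed target switch. Order matters: on each side, the even-indexed switch nearest the outer end must be processed first, because only then is the hypothesis on $a$ (its other neighbors are mutually non-adjacent leaves) satisfied for the next $X$ measurement; the conclusion of the lemma is exactly what supplies this hypothesis to the step that follows. Each application grafts one odd-indexed switch and its $n$ leaves onto the growing star at the target, so after all $(m-1)/2$ $X$ measurements the surviving graph is a star with center at the target switch and $\alpha - 1$ leaves, and by \autoref{example:startoKn} a single local complementation at the center sends it to $K_\alpha$. The main obstacle is establishing the absorption lemma itself: tracking which edges survive the three local complementations and the deletion requires a careful case analysis on which pairs lie inside each changing neighborhood, but once the lemma is verified the remainder is an inductive bookkeeping argument on the chain length $m$.
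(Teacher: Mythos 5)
Your proposal is correct and takes essentially the same route as the paper: the identical schedule of $\sigma_Z$ measurements on the leaves of alternating switches followed by $(m-1)/2$ $\sigma_X$ measurements, with the vertex count $m(n+1)-(n+1)\frac{m-1}{2}=(n+1)\frac{m+1}{2}$ being exactly the paper's stated proof. Your absorption lemma is precisely the ``minimal cell'' computation the paper defers to its appendix (local complementation at the leafy neighbor, then at the measured switch, deletion, and a trivial final complementation), so you have merely packaged the same argument more rigorously; the only cosmetic differences are the indexing convention for which switches are removed and the sweep order (you sweep inward from both ends, the paper accumulates at one end).
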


\begin{proof}
    The result can be derived by following directly the steps in the algorithm. 
    We start with all the $m (n+1)$ nodes and then subtract the number of measurements 
    performed. First, we measure the leaf nodes of every other switch, totaling
    $n \frac{m-1}{2}$. After that, we apply $\sigma_X$ gates on the remaining switches, transforming the network into a star graph. This yields a total of 
    \begin{equation*}    
     m(n+1) - n\frac{m-1}{2} - \frac{m-1}{2} = (n+1)\bigl( m - \frac{m-1}{2} \bigr).
    \end{equation*}
    Thus, we arrive at the expression $\alpha = (n+1)\frac{m+1}{2}$ clients completely 
    connected. The cost of this transformation,the number of necessary measurements, is 
    $(n+1)\frac{m-1}{2}$.
\end{proof}

\begin{algorithm*}[t]
\caption{Maximally Connected Graph from linear graph state $\ket{G}$ with  odd number of $m$ switches and $n$ leaves per $S_w$.}\label{algomaxconnected}
\begin{algorithmic}[1]
\Function{AchieveMaxConnectedGraph}{$\ket{G}, m, n$}
    \State $P_0 \gets \{\text{nodes color } 0 \}$ ;  $P_1 \gets  \{\text{nodes color } 1 \}$ \Comment{bi-colored set of nodes}
    \State $E \gets E_S \cup E_{n_{i} }$ \Comment{edges between   $S_{w_i}$ $S_{w_{i+1}}$ and edges corresponding to leaves $K_j^i$ in a $S_{w_i}$}
    \State $G= (P_0, P_1, E)$ \Comment{Resulting bi-colored graph $G$}

\For{$odd \in \{m \bmod 2\}$ } 
\Comment{Leave $Sw_0$ and $Sw_{m-1}$ untouched and select Odd switches}
    \State $\sigma_Z(\{K_{odd}^k\})$ \Comment{$n$ measurements $Z$ on leaf nodes of every switch $S_{w_{odd}}$}
    \State $\sigma_X(\{S_{w_{odd}}\})$ \Comment{A measurement $X$ on every switch $S_{w_{odd}}$ with $k_0$ as its right neighbouring switch}
    \EndFor
    \State \Return $G'=({P'}_0, {P'}_1,E')$ \Comment{Maximally Connected Graph with $(n+1)\frac{(m+1)}{2}$ vertices.}
\EndFunction
\end{algorithmic}
\end{algorithm*}

\begin{figure*}[ht!]
      \centering
      \begin{tabular}{c|c}
       \begin{minipage}{0.44\textwidth} 
        \centering
        \includegraphics[width=\linewidth]{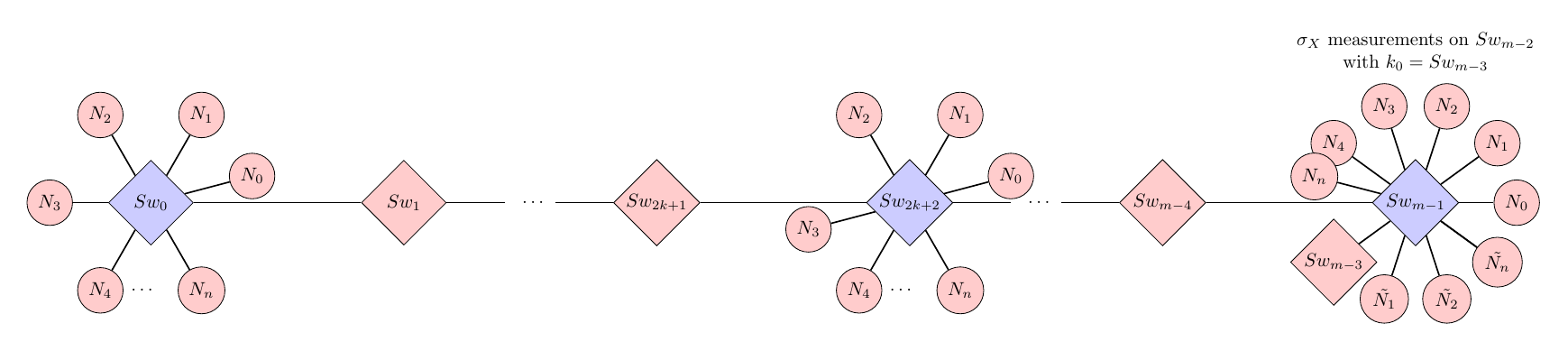}
        \label{fig:subfig3}
    \end{minipage} \hfill
           &  
             \begin{minipage}{0.5\textwidth} 
        \centering
        \includegraphics[width=\linewidth]{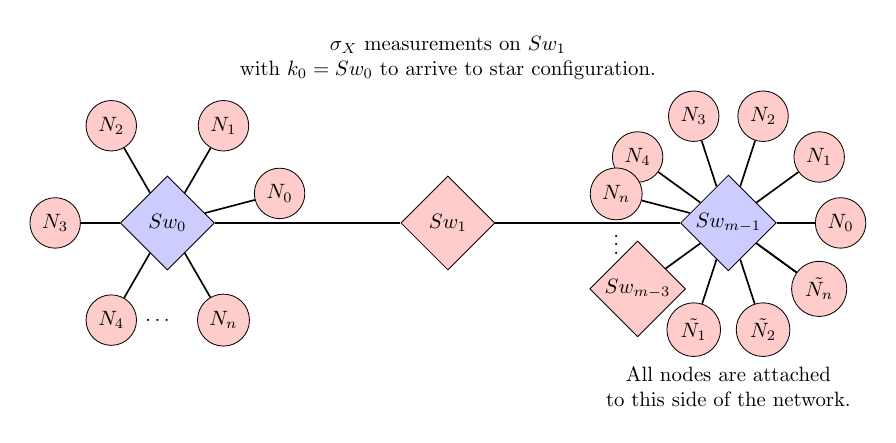}
        \label{fig:subfig4}
    \end{minipage} \\
            (a) & (b) \\
      \end{tabular}
    \caption{Steps of \autoref{algomaxconnected} for the case of an odd number of switches  over a full linear network  including the switches $Sw_i$ and their clients $N_j$.  Clients are eliminated through $\sigma_Z$ measurements in every other node. 
    (a) Starting from the second-to-last node, $\sigma_X$ measurements are performed on the lone switches, with $k_0$ being the right neighbor. 
    (b) In the final step, all nodes accumulate at the last switch. A final $\sigma_X$ measurement in $Sw_1$ results in the formation of a 
    star configuration that is LC equivalent to a complete graph.
    \label{fig:linear-network}}
\end{figure*}

Clearly, we observe that the result for $\alpha_0$ in cluster graphs, formally proven in~\cite{deJong2024} is not recovered when taking the limit $n \to 0$. This discrepancy arises from the topological differences between the two networks: while a linear cluster resembles a one-dimensional structure, a multi-star graph does not.  

Next, consider an even-numbered multi-star graph, and let us follow a similar procedure
to~\autoref{algomaxconnected}. We begin by leaving the first and last nodes untouched and 
perform measurements on the leaf nodes of every one of the alternating $(m - 2) / 2$ switches. 
This leads to a discrepancy when we reach the final measurement step: both $Sw_{m-2}$ and 
$Sw_{m-1}$ retain their leaf nodes: $Sw_{m - 1}$ due to the initial hypothesis and $Sw_{m - 2}$
because of the alternating measurement pattern. So, at this stage, we must additionally 
remove the leaf nodes of $Sw_{m-2}$, which results in two consecutive leafless switches
by graph complementation. Consequently, the complete graph connectivity is reduced, since 
we must perform a $\sigma_Y$ measurement on $Sw_{m - 2}$ to proceed with the standard 
algorithm. In summary, when dealing with networks having an even number of switches, we 
must first transform it into an odd-numbered network of switches in order to be able to  
apply the proposed algorithm. Since we cannot assume that introducing additional switches 
with leaf nodes is allowed, we conclude that the only possibility is to remove at least 
one switch, thereby strictly reducing the network size by one star.

\begin{figure}[t]
    \centering
    \includegraphics[width=1\columnwidth]{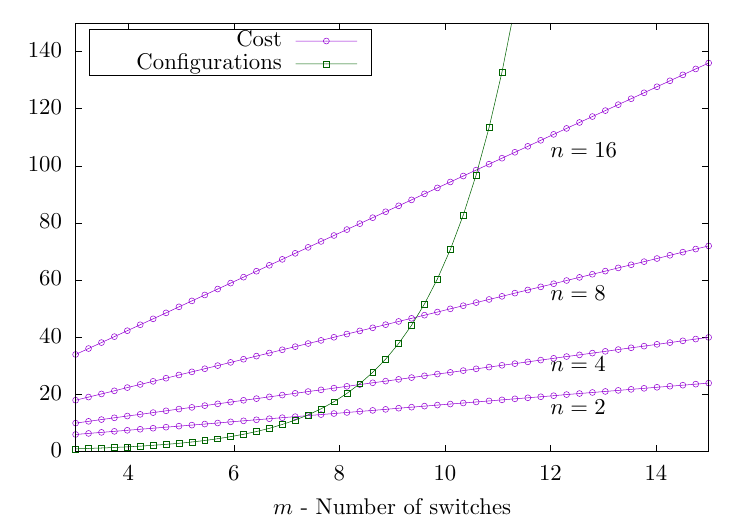}
    \caption{\label{fig:GraphStatesCost} Cost of the graph state transformation vs. number of switches $m$ and nodes per switch.}
\end{figure}

\subsection{Different number of leaf nodes in each switch}

We now aim to calculate the maximum allowed connectivity for a more realistic linear 
network where each switch has a different number of leaf nodes. Suppose that the 
network consists of $m$ switches, with each switch connected to $n_0, n_1, \dots, 
n_{m-1}$  leaf nodes, respectively. Let us define $T = \{n_0, \dots, n_{m-1}\}$ 
as the total set of leaf nodes. The total number of nodes in the network is given 
by $N_T = m + \sum_{i=0}^{m-1} n_i$. To determine the maximum number of neighbors
remaining, we calculate the total number of nodes and subtract the number of 
measurements performed. since each measurement removes a node from the network.  
We consider the case where $m$ is odd and proceed again according to the steps 
outlined in~\autoref{algomaxconnected}. First, select $\frac{m-1}{2}$ switches 
to perform $\sigma_Z$ operations onto. As discussed previously, these switches 
are chosen so as to remove all leaf nodes from every other switch. Thus, the set 
of selected nodes is $I = \{n_1, n_3, \dots, n_{m-2} \}$, and the total number 
of measurements made during this step is $M_1 = \sum_{i \in I} n_i$. Next, we 
perform $\frac{m - 1}{2}$ $\sigma_X$ operations.  This step also contributes a 
total of $M_2 = \frac{m-1}{2}$, and results in the star configuration
$\ket{S_{\Tilde{n_1}}}$ with $\Tilde{n_1} = \frac{m-1}{2} + \sum_{i \in T 
\setminus I} n_i$. Therefore, the total number of measurements implemented on 
the network is given by $M_T = M_1 + M_2 = \frac{m-1}{2} + \sum_{i \in I} n_i$,
and subtracting this from the total number of nodes, we obtain the remaining 
neighbors as $N_T - M_T = \frac{m-1}{2} + \sum_{i \in T \setminus I} n_i$,
where $T\setminus I = \{n_{2i}\}_{i=0}^{\frac{m-1}{2}}$. Thus, the new maximum 
neighbor connectivity for the original $\alpha$ is given by $\Tilde{\alpha} = 
\frac{m+1}{2} + \sum_{i \in T \backslash I} n_i$. Finally, it is straightforward 
to verify that, when $n_i \rightarrow n \quad \forall i \in T, \quad \Tilde{\alpha} 
\rightarrow \alpha$, and the complete interconnection graph $K_n$ is again realized.

\section{Other forms of Neighboring }
\label{subsec:other:forms:Neighboring}

Although \autoref{algomaxconnected} allows us to achieve maximum neighbor connectivity 
in the form of a complete graph resembling a GHZ state, it ultimately leads to a fixed 
outcome. The procedure is constrained to a particular set of gates on specific 
nodes and does not generalize beyond these conditions. However, the results
in~\cite{Chen2024} introduce a degree of flexibility in peer-to-peer connectivity, 
provided  that a bi-star logical topology is first established. This Section is 
dedicated to achieving precisely that for our specific network configuration.

In order to obtain a bi-star configuration from an initial $m$-star configuration, 
we must perform a minimum of  $\frac{m-1}{2}$ measurements ---excluding the 
$\sigma_Z$ measurements on the leaf nodes. This requirement becomes evident 
when considering the underlying structure of the algorithm. In the final step, we 
obtain a system consisting of  two switches and their corresponding leaf nodes 
---one switch retaining the original leaves,  while the other accumulates additional 
leaf nodes—connected via an intermediary switch  without any leaf nodes. Applying a 
$\sigma_X$ gate at this stage results in a star configuration that is locally 
equivalent to a complete graph. However, if instead a $\sigma_Y$ is applied, the 
system transitions into a bi-star configuration.

While this particular case is not immediately compelling (since the resulting nodes 
maintain the same communication capabilities as in the original algorithm) 
selecting different switches for leaf node deletion allows for a broader range 
of communication possibilities. The challenge arises naturally from this intuition: 
we must identify  $\frac{m-1}{2}$ nodes whose neighbors will be removed, followed by 
the application of the appropriate Pauli gates to establish a bi-star topology. The 
complexity of this task escalates rapidly, as it evolves into a combinatorial 
optimization problem: assuming an odd number $m$ of switches, we would have 
$m-2$ switches from which to choose, and we must select $\frac{m-1}{2}$ of them. 
Hence, the number of possible configurations is  $\binom{m-2}{\frac{m-1}{2}} \approx 
2^{m/2}$ for large $m$, which becomes computationally unmanageable even for a 
relatively small number of switches  (e.g., $m=9$).  The problem is more tractable 
if we exploit  certain properties of the network topology. Since the network has 
an odd number of switches,  it is always possible to identify a central switch 
located at position $\frac{m-1}{2}$. This central symmetry allows us to simplify the 
problem by treating several configurations as equivalent.  Specifically,  
we can use the symmetry $i \rightarrow m- 1 - i$, $i = 0, \dots, \frac{m-1}{2}$,
around this central node as identical configurations. 

To illustrate this, we consider the example with $m = 7$ (the case $m = 5$ is trivial, 
as it results  in only two distinct straightforward configurations). The number of 
distinct cases is $\binom{5}{3}= 10$, and the following symmetries hold $1 
\leftrightarrow 5$, $2 \leftrightarrow 4$, $3$ unchanged. Using these facts and 
exploring the few possibilities, we arrive at a total of $6$ distinct cases, 
enumerated for the sake of illustration in~\autoref{tab:client_removal}. 
There are three distinct topological results, 
in terms of maximizing the number of neighboring connections, that arise when 
performing sequential gates on the network. The first result $Sw_1, Sw_3, Sw_5$ 
yields the same configuration as the one presented in \autoref{algomaxconnected} 
and has already been studied. Another interesting configuration is the tri-star, 
a structure consisting of three star graphs joined at their central nodes -- Tri-
star($4^\dag, 5, 6$)  Tri-star($0, 5^\dag, 6$) . In this topology, one node, marked 
with $\dag$, holds the majority of the clients, while the other two nodes retain 
their original connections. Although this sequence of gates does not maximize the 
number of neighboring connections, it enables different sets of allowed communications
between nodes. A similar scenario occurs for $Sw_1, Sw_2, Sw_4$ arriving to the
$Bi-star(0,6^\dag)$ topology.  An intriguing case arises in  the form 
$Bi-star(0^\dag, 6^\dag)$, where both switches hold a similar number  of clients. 
If we were to follow standard protocols of removing all but one client from one 
of the nodes, we would compromise our goal of achieving maximum connectivity, 
thereby  negating the intended purpose of this result. Instead, we adopt an 
alternative protocol in~\cite{Chen2024} to the realization of an extranet 
artificial topology, in our interpretation a bi-star topology interconnecting 
each client of one star  with each client of the other. Precisely, through the 
sequence of gates $\sigma_X(0,6) \rightarrow \sigma_Z(6)$, where nodes $6$ 
and $0$ are interchangeable, yielding the same configuration $\ket{K_{n_1,n_2}}$, 
where $K_{n_1, n_2}$ denotes the complete bipartite graph with $(n_1, n_2)$ in each 
class.  ~\autoref{fig:GraphStatesCost} depicts the cost
(number of measurements) of the proposed transformations. 

\begin{table}[t]
\centering
\caption{Clients removed and corresponding transformations}
\label{tab:client_removal}
\begin{tabular}{|c|cc|c|} \hline
\textbf{$Sw$} & \multicolumn{1}{c|}{\textbf{First Step}} & \textbf{Second Step} & \textbf{Result} \\ \hline
 $\{ 1, 3, 5 \}$ & \multicolumn{2}{c|}{\autoref{algomaxconnected}} & Bi-star ($0, 6^\dag$) \\ \hline
 $\{ 1, 2, 3 \}$  & \multicolumn{1}{c|}{$\sigma_X(1,0)$} & $\sigma_X(3,2)$ & Tri-star ($4^\dag, 5, 6$) \\ \hline
 $\{ 2, 3, 4 \}$ & \multicolumn{1}{c|}{$\sigma_X(2,1)$} & $\sigma_X(4,3)$ & Tri-star ($0, 5^\dag, 6$) \\ \hline
 $\{ 1, 2, 4 \}$ & \multicolumn{1}{c|}{$\sigma_Y(1)$} & $\sigma_X(4,5)$, $\sigma_X(2,3)$ & Bi-star ($0, 6^\dag$) \\ \hline
 $\{ 1, 2, 5 \}$ & \multicolumn{1}{c|}{$\sigma_Y(1)$} & $\sigma_X(5,4)$, $\sigma_X(2,3)$ & Bi-star ($0^\dag, 6^\dag$) \\ \hline
 $\{1, 3, 4 \}$ & \multicolumn{1}{c|}{$\sigma_Y(3)$} & $\sigma_X(4,5)$, $\sigma_X(1,2)$ & Bi-star ($0, 6^\dag$) \\ \hline
\end{tabular}
\end{table}

\section{Discussion and Application}
\label{sec:discussion}

\subsection{Quantum Networks} 

Though the Quantum Internet has not yet been fully conceptualized, we adhere to 
the consensus that it would operate under similar principles as the classical Internet,
supported by a suite of Quantum Internet protocols to enable seamless communication 
between devices. Thus, a Global Quantum Network will be composed of 
interconnected quantum sub-networks. Our proposal formalizes hierarchical networks
composed of bi-star and tri-star topologies, i.e., a Quantum Internet backbone 
plus Quantum Islands as a possible network-wide  architecture.

This structure has already been demonstrated in QKD deployment
(Euro QCI~\cite{garciacid2024strategiesintegrationquantumnetworks}), where 
each QKD domain has its own SDN controller responsible for intra-domain services, 
and these controllers communicate via an East-West Bound Interface to support 
inter-domain communication. Based upon this idea, we propose a hybrid model 
combining bipartite quantum networks and graph-state networks. In this model, 
intra-domain connectivity relies on bipartite entanglement distribution, ensuring 
better compatibility with classical communication and control, while graph states 
would enable the generation and manipulation of complex entangled states within 
each domain, aligning with the network topology. An architectural design featuring 
a two-layer network enables cluster reconfiguration, see~\cite{Clayton2024}. 
Such techniques allow arbitrary networks to be transformed into structured
configurations, such as multi-star networks, facilitating the efficient execution 
of the algorithms presented in this article.

\subsection{Efficient Quantum Protocols}
\label{aplications:protocols}

Graph states provide the necessary entanglement structure for secure and 
coordinated multi-party quantum protocols. 

\subsubsection{Multi-user QKD} 
One first  application is multi-user QKD for an arbitrary subset \( S \) of 
users. Instead of constructing entanglement through bipartite links between all 
user pairs—requiring \( \binom{|S|}{2} \)  connections ---a more efficient 
approach involves generating a GHZ state for all clients and discarding those not 
in \( S \). While not optimal, this method eliminates the need for Steiner trees 
and significantly reduces resource  overhead~\cite{chelluri2025multipartiteentanglementdistributionbellpair}.

\subsubsection{Quantum Conference Key Agreement (QCKA)} This is a widely studied 
field that aims to extend Quantum Key Distribution (QKD) to multiple
parties~\cite{Murta_2020}. Consider a total of $n$ users who wish to establish a 
shared secret key: Alice and the set of users $Bob_1, \dots, Bob_{n-1}$. This can 
be achieved using an $n$-partite GHZ state, which represents the final quantum 
state obtained in our protocols. A set of stabilizer measurements can be performed to 
detect potential eavesdroppers by establishing parity checks. Additionally, 
performing $\sigma_Z$ or $\sigma_X$ measurements on each node enables the full key 
to be extracted securely. Such configurations are essential for overcoming the 
limitations of bipartite entanglement, enabling multi-party Quantum Key Distribution
(QKD)~\cite{Qian2012} and quantum secret sharing~\cite{Markham2008}.

\subsubsection{Distributed Quantum Computing} Another important application of 
graph states is found in grid topologies, where vertices 
are positioned at the intersections of a grid. This structure is particularly relevant for
Distributed Quantum Computing (DQC)~\cite{Mao2024} and the widely used $X$ protocol, which
enables bipartite entanglement between any two nodes in the grid. This protocol is further
optimized in~\cite{Negrin2024}, where improvements in resource efficiency and robustness 
against noise in dynamic and lossy networks are demonstrated.

\section{Conclusions}
\label{sec:conclusions}

In this work, we have extended and refined the methods proposed in~\cite{deJong2024} 
and \cite{Chen2024}, thereby enhancing their applicability and generality. 
Building upon these, we expanded their algorithm to a linear cluster
architecture, where client nodes are attached to each of the central switches, 
arriving at a fixed maximum GHZ state for a multi-star network, depending on the 
number of switches $m$ and their neighbors $n$. We worked out in full detail
a particular network case for $m=7$ switches and an arbitrary number of client nodes.

A key open problem is to derive a general result for node deletion, 
leveraging network symmetries and node selection strategies to mitigate the 
exponential growth in complexity. Further exploration in this direction could lead 
to significant optimizations in large-scale quantum network architectures. A wider
area for future research lies in the area of Distributed Quantum Computing (DQC);  
we point at exploring mesh and grid  topologies to extract similar results to those 
of this paper. 




 
\appendix
\label{appendix: apendixa}


Since the graph is bicolorable (coloring scheme shown in \autoref{fig:multistar}), we define two subsets of nodes and the edge set as $E = E_S \cup E_{n_i}$ where $E_S$ represents the sequential connections between switches, while $E_{n_i}$
represents the connections between each switch and its  leaf nodes 
$K_j^i$. Thus, the graph is given by $G = (P_0,P_1, E)$.
\begin{align*}
    P_0 &= \bigl\{ Sw_{2i} \bigr\}_{i=0}^{\frac{m-1}{2}} \cup \bigl\{ K_j^{2i+1} \bigr\}_{j=1,i=0}^{n, \frac{m-3}{2}}, \\
    P_1 &= \bigl\{ Sw_{2i+1} \bigr\}_{i=0}^{\frac{m-3}{2}} \cup \bigl\{ K_j^{2i} \bigr\}_{j=1,i=0}^{n, \frac{m-1}{2}} \\
    E_S = & \bigl\{ (Sw_i,Sw_{i+1}) \bigr\}_{i=0}^{m-2}, \quad E_{n_i} = \bigl\{ (Sw_i, K_j^i) \bigr\}_{j=1,i=0}^{n, m-1}. 
\end{align*}

Following  \autoref{algomaxconnected}, we start by selecting alternating 
nodes, beginning with $Sw_0$ (they align with the odd-labeled switches). Then, apply $\sigma_Z$ measurements on their leaf nodes, modifying 
$P_0$ while leaving $P_1$ unchanged. As a result, the edge set is updated to
\begin{align*}
    P_0' = &\bigl\{ Sw_{2i} \bigr\}_{i=0}^{\frac{m-1}{2}}, \quad P_1' = P_1.\\
       E_S' = &E_S, \quad E_{n_i}' = \bigl\{ (Sw_{2i}, K_j^{2i}) \bigr\}_{j=1,i=0}^{n,\frac{m-1}{2}},
\end{align*}
where $E' = E_S \cup E'_{n_i}$. The modified graph is therefore $G'=(P_0',P_1,E')$.
Next, we perform a $\sigma_X$ operation on every odd switch, where the $\sigma_X$ 
operator acts as $\sigma_X (G)_{\alpha, \beta} = \tau_{\beta}(\tau_{\alpha}(\tau_{\beta}(G))-\alpha))$.
To illustrate this process, consider a minimal cell affected by a $\sigma_X$ 
operation when applied at $\alpha=Sw_{2i+1}$ with $\beta=Sw_{2i}$. The relevant
node subset is
\begin{align*}
    \Tilde{V} = & \bigl\{ Sw_{2i+1}, Sw_{2i}, Sw_{2i-1}, Sw_{2i+2}, K_j^{2i}, K_j^{2i+2} \bigr\}_{j=1}^n \\
    \Tilde{E} =& \bigl\{ (Sw_l, Sw_{l+1}) \bigr\}_{l=2i-1}^{2i+1}  \cup
    \bigl\{ (Sw_h, K_j^h) \bigr\}_{j=1}^n
\end{align*}
where $h=\{2i,2i+2\}$Local complementation at vertex $Sw_{2i}$ results in a complete graph on the subset
$V = \{Sw_{2i-1}, Sw_{2i+1}, Sw_{2i}, K_j^{2i}\}_{j=1}^n$. Since no neighboring 
nodes of $Sw_{2i}$ were previously connected, new edges appear
\begin{align*}
    \Tilde{E}' = & \{(K_j^{2i}, K_h^{2i})\}_{j \neq h} \cup \{(Sw_{2i-1}, Sw_{2i+1})\} \cup \\
    & \{(Sw_{2i-1}, K_j^{2i})\}_{j=1}^n \cup \{(Sw_{2i+1}, K_j^{2i})\}_{j=1}^n.
\end{align*}
Additionally, $Sw_{2i+2}$ becomes connected to $Sw_{2i+1}$ along with its leaf nodes $K_j^{2i+2}$. Performing local complementation at $Sw_{2i+1}$ removes the edges within the previously complete subgraph, yielding the updated edge set
\begin{align*}
    \Tilde{E}'' = & (Sw_{2i+1}, \{Sw_{2i}, Sw_{2i-1}, Sw_{2i+2}, K_j^{2i}\})_{j=1}^n \cup \\
    & (Sw_{2i+2}, \{Sw_{2i}, Sw_{2i-1}, K_j^{2i}, K_j^{2i+2}\})_{j=1}^n.
\end{align*}
Since the next step is the deletion of $Sw_{2i+1}$, all its connections are removed, 
leading to the new graph $\Tilde{V}''' = \{Sw_{2i-1}, Sw_{2i}, Sw_{2i+2}, K_j^{2i}, K_j^{2i+2}\}_{j=1}^n$, $\Tilde{E}''' = Sw_{2i+2} \times\{Sw_{2i}, Sw_{2i-1}, K_j^{2i}, K_j^{2i+2}\}_{j=1}^n$.
This structure corresponds to a star graph centered at $Sw_{2i+2}$.
Finally, since $Sw_{2i}$ has only one neighbor, the local complementation operation at 
$Sw_{2i}$ is trivial, leaving the graph unchanged.
By iterating this process throughout the network, we find that the final output of the 
algorithm is a larger star graph consisting of all even-indexed switches and their 
respective leaf nodes
$Sw_{2i}, K_j^{2i}, \quad \forall i\in \{0, \frac{m-1}{2}\}, \quad j\in \{1,n\}$.
This graph is locally equivalent to a complete graph of the same size, which can be achieved by performing a local complementation at its central vertex. \qed

\end{document}